\def\E{{\mathbb{E}}}
\def\P{{\mathbb{P}}}
\def \N{{\mathbb{N}}}
\def \I{{\mathbb{I}}}
\def \F{{\mathcal{F}}}
\def \A{{\mathcal{L}}}
\newcommand{\dx}{\operatorname{d}\!}
\crefname{hypothesis}{Hypothesis}{Hypotheses}
\title{Last passage American cancellable option in L\'evy models
\thanks{Submitted to the editors 01.12.22.
\funding{The research of Zbigniew Palmowski and Pawe{\l} St\c{e}pniak is partially supported by Polish National Science Centre Grant
No. 2021/41/B/HS4/00599.}}}
\author{Zbigniew Palmowski\thanks{Wroc{\l}aw University of Science and Technology, Wroc{\l}aw, Poland
  (\email{zbigniew.palmowski@pwr.edu.pl}
  ).}
\and Pawe{\L} St\c{e}pniak \thanks{Wroc{\l}aw University of Science and Technology, Wroc{\l}aw, Poland
  (\email{Pawel.Stepniak@pwr.edu.pl}).}
}
\begin{document}

\maketitle

\begin{abstract}
We derive the explicit price of the perpetual American put option cancelled at the last passage time
of the underlying above some fixed level. We assume the asset process is governed by a geometric
spectrally negative L\'evy process. We show that the optimal exercise time is the first epoch when
asset price process drops below an optimal threshold. We perform numerical analysis as well considering
classical Black-Scholes models and the model where logarithm of the asset price has additional
exponential downward shocks. The proof is based on some martingale arguments and fluctuation theory of L\'evy processes.
\end{abstract}

\begin{keywords}
American options; optimal stopping problem; L\'evy process; last passage
time; free-boundary problem
\end{keywords}

\begin{MSCcodes}
60G40, 60J75, 91G80
\end{MSCcodes}

\section{Introduction}

The main goal of this paper is to find the closed-form formula for the price of the perpetual American put option
cancelled at the last passage time
of the underlying above some fixed level $h$. More formally, in this paper we find the following value function
\begin{equation}\label{define_V}
\overline{V}(s) = \sup_{\tau \in \mathcal{T}} \E[e^{-r\tau}(K - S_\tau)^+; \tau < \theta |S_0=s],
\end{equation}
for a family of stopping times $\mathcal{T}$, an underlying risky asset price process $S_t$, fixed strike price $K > 0$ and the risk-free interest rate $r$, where
\begin{equation}
\theta = \sup \{ t \geq 0\ |\ S_t \geq h \},
\end{equation}
for some fixed threshold $h > K$. In \eqref{define_V} $\I(\tau < \theta)$ denotes the indicator of the event $\{\tau < \theta\}$.
We assume the L\'evy market, that is, in our model the asset price is described by a geometric spectrally negative L\'evy process
\begin{equation}\label{S_def}
S_t = s e^{X_t},
\end{equation}
where $X_t$ is a spectrally negative L\'evy process and $s = S_0$ is an initial asset price.
It is well known that a L\'evy market allows for a more realistic representation of price dynamics
capturing certain features such as skewness and asymmetry, and a
greater flexibility in calibrating the model to market prices; see e.g. \cite{Cont1, Cont2} and references therein.

In fact, we choose
\begin{equation}\label{X_def}
X_t = x + \mu t + \sigma B_t  - \sum_{k=1}^{N_t}U_k, 
\end{equation}
where $x = X_0 = \log s$ and $\sigma >0$. In \eqref{X_def}
$B_t$ is a Brownian motion, $\mu$ is a~fixed drift,
$N_t$ is an (independent of $B_t$) homogeneous Poisson process with intensity $\lambda$ and $\{U_k\}_{\{k\in \mathbb{N}\}}$ is an (independent of $B_t$ and $N_t$) sequence of independent identically distributed exponential random variables with the expected value $\rho^{-1}$.
We assume that all considered processes live in a common filtered probability space $(\Omega, \F, \{\mathcal{F}_t\}_{\{t\geq 0\}}, \P)$
with a natural filtration $\{\F_t\}_{\{t \geq 0\}}$ of $X_t$ satisfying usual conditions.
Above process \eqref{X_def} makes the asset price process $S_t$ to be a jump-diffusion model.
Additionally, we assume that no dividends are paid to the holders of the underlying asset
and the expectation in \eqref{define_V} is taken with respect of the martingale measure $\P$, that is,
$e^{X_t-rt}$ is a $\P$-local martingale.
In fact, as noted in \cite[Table 1.1, p. 29]{Cont2}, introducing jumps into
the model implies a loss of completeness of the market, which results in the lack of uniqueness
of an equivalent martingale measure. Still, we can choose one of this measure and of course the price
is the same regardless of the choice of the martingale measure.

We allow in \eqref{X_def} to have $\lambda=0$ which corresponds to the classical Black-Scholes (B-S) model
with $\mu =r-\frac{\sigma^2}{2}$.

To find the value function \eqref{define_V}, we will use
the 'guess and verify method' in which we guess the candidate stopping rule and then we verify that this is
truly the optimal stopping rule using the Verification Theorem \ref{verthm}.
That is, we first guess the form of the stopping time
as first downward crossing epoch of some threshold $a>h$
and calculate the value function in terms of the so-called scale functions
using the fluctuation theory of L\'evy processes.
Then we maximize it with respect to the exercise level $a$. In the last step we prove
that our guessed value function satisfies HJB system and hence verification step.

In the final part of the paper we performed extensive numerical analysis based on the known form of the scale functions
for the process \eqref{X_def}.

American option pricing has been investigated over the past four decades in various contexts. This paper
focuses mainly on adding rather new cancellation feature built into the basic American contract.
This cancelling or recalling in the financial contracts can effectively mitigate undesirable positions
in the risky times or the times when the markets are highly volatile.
Therefore we believe that this type of financial contracts can be very attractive for many investors.
Of course this type of derivatives include a cancellation provision.

Our paper continues the research done by \cite{Emmering} and \cite{Kiefer} who consider more complex contracts and
choose continuous time Dynkin game approach. Our approach is more similar to \cite{algo} where authors consider American-style option \eqref{define_V} as well but they do pricing by solving an appropriate HJB system of equation.
In \cite{algo} the underlying asset price was described by the geometric Brownian motion
for which above approach is very natural due to the locality of the diffusive generator of the asset price process $S_t$.
Still, in the context of non-local generators, a 'guess-and-verify' method used in this paper seems to be more efficient.
We also give here a link with seminal HJB equation though.

The paper is organized as follows. Our main result is given in Section
\ref{sec:main}. The proof of the main result is contained in Section \ref{sec:proof}.
At the end of the paper, in Section~\ref{sec:num} we describe some numerical analysis.

\section{Main result}
\label{sec:main}

To present the main result of this paper we introduce required notations.
Let $\P$ be a martingale measure and $\E$ be the expectation with respect of $\P$ with the convention
$\E_{\log s}[\cdot] =\E[\cdot|S_0=s]=\E[\cdot|X_0=\log s]$. We will skip subindex in expectation when $X_0=0$ (hence $S_0=1$).

We define a Laplace exponent of the process $X_t$ via
\begin{equation*}
    \Psi(\theta) = \frac{1}{t}\log\E e^{\theta X_t}.
\end{equation*}
For the process $X_t$ defined in (\ref{X_def}) we have
\begin{equation}\label{Psidef}
    \Psi(\theta) = \mu\theta + \frac{\sigma^2\theta^2}{2} - \frac{\lambda\theta}{\theta + \rho}.
\end{equation}
Since under the risk-neutral measure $\P$, the discounted asset price process $e^{-rt}S_t$ is a~martingale
thus we assume throughout of this paper that
\begin{equation}\label{Psi1}\Psi(1) = r.
\end{equation}
In other words we take
\begin{equation}\label{martmeasure}
  \mu = r - \frac{\sigma^2}{2} +  \frac{\lambda}{1+\rho}.
\end{equation}

For $r \geq 0$ the so-called scale function is defined as a continuous function $W^{(r)}:[ 0,\infty ) \rightarrow [ 0,\infty ) $ such that:
\begin{equation}\label{W_def}
		\int_0^\infty e^{-\beta x}W^{(r)}(x)\dx x = \frac{1}{\Psi(\beta)-r}, \quad \text{for $\beta>\Phi(r)$}.
	\end{equation}
With the first scale function we associate the second one given by
\begin{equation*}
    Z^{(r)}(x) = 1 + r\int_{0}^{x} W^{(r)}(y)\dx y.
\end{equation*}

\begin{lemma}\label{lemma_W}
    The scale function for process $X$ defined by (\ref{X_def}) is given by
    \begin{equation}\label{W_sum}
    W^{(r)}(x) = \sum_{i=1}^3 C_i e^{\eta_i x},
\end{equation}
where:
\begin{align}
    \eta_1 = 1, \qquad
    \eta_{2/3} &= \frac{-1}{2(\rho\sigma^2 + \sigma^2)}\left(2\lambda + 2r + \rho^2\sigma^2 + \rho\sigma^2 + 2r\rho \pm 2\sqrt{\omega}\right) \label{eta2}
\end{align}
and
\begin{equation}
    \omega = \lambda^2 + \lambda (\rho+1)(2r+\sigma^2) + (\rho+1)^2 \left(r - \frac{1}{2}\rho\sigma^2\right)^2.
\end{equation}
Furthermore,
\begin{align}
    C_1 = \frac{2(\eta_1+\rho)}{\sigma^2(\eta_1 - \eta_2)(\eta_1 - \eta_3)},\quad 
    C_2 = \frac{2(\eta_2+\rho)}{\sigma^2(\eta_2 - \eta_1)(\eta_2 - \eta_3)},\label{C2}
    \\
    C_3 = \frac{2(\eta_3+\rho)}{\sigma^2(\eta_3 - \eta_1)(\eta_3 - \eta_2)}. \label{C3}
\end{align}
\end{lemma}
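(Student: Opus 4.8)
The plan is to work directly from the defining identity \eqref{W_def} and recover $W^{(r)}$ by inverting the Laplace transform via partial fractions. First I would clear the denominator in $\Psi(\beta)-r$: using \eqref{Psidef}, multiplying by $(\beta+\rho)$ gives $\Psi(\beta)-r=\frac{q(\beta)}{\beta+\rho}$, where $q(\beta):=\bigl(\mu\beta+\tfrac{\sigma^2}{2}\beta^2-r\bigr)(\beta+\rho)-\lambda\beta$ is a cubic polynomial with leading coefficient $\sigma^2/2$. Hence $\frac{1}{\Psi(\beta)-r}=\frac{\beta+\rho}{q(\beta)}$ is a proper rational function (degree of numerator one less than degree of denominator), which is exactly the shape needed for a partial-fraction/Laplace-inversion argument.

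Second, I would locate the three roots of $q$. Since \eqref{Psi1} forces $\Psi(1)=r$, the value $\beta=1$ is a root of $q$, so $\eta_1=1$ and polynomial division yields $q(\beta)=\tfrac{\sigma^2}{2}(\beta-1)(\beta^2+p_1\beta+p_0)$ with $p_1,p_0$ explicit functions of $\mu,\sigma,\rho,\lambda,r$ (and $\mu$ itself expressed through \eqref{martmeasure}). Solving $\beta^2+p_1\beta+p_0=0$ by the quadratic formula produces the two remaining roots; a direct computation must then show that they coincide with $\eta_{2/3}$ in \eqref{eta2} and that the quantity under the radical equals $4\omega/(\rho\sigma^2+\sigma^2)^2$ with $\omega$ as displayed. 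Along the way I would check $\omega\ge 0$ — it is a sum of manifestly nonnegative terms when $r,\lambda,\sigma,\rho\ge 0$ — so $\eta_2,\eta_3$ are real, and that $\eta_2,\eta_3<0<1$, which identifies $\Phi(r)=\eta_1=1$ as the abscissa beyond which \eqref{W_def} is required to hold.

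Third, with $q(\beta)=\tfrac{\sigma^2}{2}\prod_{i=1}^3(\beta-\eta_i)$ and the roots distinct, I would expand $\frac{\beta+\rho}{q(\beta)}=\sum_{i=1}^3\frac{C_i}{\beta-\eta_i}$; the residue at $\eta_i$ is $C_i=\frac{2(\eta_i+\rho)}{\sigma^2\prod_{j\ne i}(\eta_i-\eta_j)}$, which is precisely \eqref{C2}–\eqref{C3}. Since $\int_0^\infty e^{-\beta x}e^{\eta_i x}\dx x=(\beta-\eta_i)^{-1}$ for $\beta>\eta_i$, the function $x\mapsto\sum_{i=1}^3 C_i e^{\eta_i x}$ has Laplace transform $\frac{1}{\Psi(\beta)-r}$ on $\beta>1=\Phi(r)$, and by uniqueness of the Laplace transform it must equal $W^{(r)}$. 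Finally one notes this candidate is continuous on $[0,\infty)$ (immediate) and nonnegative, as it must be by the general theory of scale functions.

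\textbf{Main obstacle.} The real work is computational: performing the division of $q$ by $(\beta-1)$, applying the quadratic formula, and then reorganising the resulting radicals into the compact form \eqref{eta2} with the stated $\omega$ — this is where a sign slip is easiest. A secondary point needing care is the degenerate case $\lambda=0$ (the Black–Scholes model) and, more generally, any coincidence among the $\eta_i$: when $\lambda=0$ the cubic factors as $\bigl(\tfrac{\sigma^2}{2}\beta^2+\mu\beta-r\bigr)(\beta+\rho)$, so one of $\eta_2,\eta_3$ equals $-\rho$, the corresponding $C_i$ vanishes by \eqref{C2}–\eqref{C3}, and \eqref{W_sum} correctly collapses to the familiar two-exponential scale function; I would include this remark to confirm consistency.
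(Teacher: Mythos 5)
Your proposal is correct and follows essentially the same route as the paper: identify the three roots of $\Psi(\eta)=r$ (with $\eta_1=1$ forced by \eqref{Psi1}), and recover the coefficients $C_i$ by the partial-fraction decomposition of $1/(\Psi(\beta)-r)$, then invert the Laplace transform in \eqref{W_def}. You simply spell out the steps (polynomial division, quadratic formula, uniqueness of the Laplace transform, and the $\lambda=0$ consistency check) that the paper compresses into a few lines with citations.
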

\begin{proof}
From \eqref{Psidef} and \eqref{W_def} it follows that
the scale function is of the form \eqref{W_sum} where $\eta_i$ ($i=1,2,3$) solve $\Psi(\eta)=r$
(see also e.g. \cite{ivanovs, kazu}).
From \eqref{Psi1} we can conclude that $\eta_1$ = 1. Solving remaining square equation we derive (\ref{eta2}). 
Observing that (\ref{W_def}) is equivalent to
\begin{equation}
    \frac{\rho + \theta}{\frac{\sigma^2}{2}(\theta - \eta_1)(\theta - \eta_2)(\theta - \eta_3)} =
    \sum_{i=1}^3 \frac{C_i}{\theta - \eta_i}
\end{equation}
gives \eqref{C2} and (\ref{C3}).
\end{proof}

We will show that the optimal exercise time is of the form
\begin{equation}\label{tau_def}
    \tau_a = \inf \{ t \geq  0: S_t \leq a \}=\inf \{ t \geq 0: X_t \leq \log(a) \}.
\end{equation}
The threshold $a$ needs to be smaller than the strike price $K$ (and hence of the cancelling threshold $h$)
so that exercising the option can be profitable to the holder.
We will take
\begin{equation}\label{aass}
0<a<K.
\end{equation}
We denote
\begin{equation}\label{functionG}
    G(s) = (K-s)^+\left( \left(\frac{h}{s} \right)^\alpha \wedge 1 \right).
\end{equation}

The main result of this paper is as follows.
\begin{theorem}\label{price}
The price of the perpetual American cancellable put option defined in \eqref{define_V} equals
    	\begin{align*}
    	    \overline{V}(s) &=\frac{\sigma^2}{2}\left[ W^{(r)'}\left( \log\frac{s}{a^*} \right) - \Phi(r)W^{(r)}\left( \log\frac{s}{a^*} \right) \right]G(a^*)  \\
			\\ &+\left[ Z^{(r)}\left( \log\frac{s}{a^*} \right) - \frac{\sigma^2}{2} W^{(r)'}\left( \log\frac{s}{a^*} \right) - W^{(r)}\left( \log\frac{s}{a^*} \right)\left( \frac{r}{\Phi(r)} - \frac{\Phi(r)\sigma^2}{2} \right)  \right]\\
 &\times \int_0^\infty \rho e^{-\rho y}G\left( a^*e^{-y} \right)\dx y
    	\end{align*}
and $\tau_{a^*}$ defined in \eqref{tau_def} is optimal stopping rule for the optimal stopping threshold
\begin{equation}\label{a*}
	a^* = \frac{ K\left( \frac{\sigma^2}{2}\sum_{i=2}^3 C_i\eta_i(\eta_i-1) + \alpha + \frac{\rho}{\rho - \alpha}\sum_{i=2}^3C_i\eta_i\left[ r\left( \frac{1}{\eta_i} - 1 \right) - \frac{\sigma^2}{2}\left( \eta_i - 1 \right)\right] \right) }{ \frac{\sigma^2}{2}\sum_{i=2}^3 C_i\eta_i(\eta_i-1) - (1-\alpha) + \frac{\rho}{\rho - \alpha + 1}\sum_{i=2}^3C_i\eta_i\left[ r\left( \frac{1}{\eta_i} - 1 \right) - \frac{\sigma^2}{2}\left( \eta_i - 1 \right)\right] }.
\end{equation}
\end{theorem}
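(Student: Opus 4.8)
The plan is to follow the 'guess-and-verify' scheme announced in the introduction, in four steps.

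\emph{Step 1 (reduction to a classical optimal stopping problem).} Write $\tau^+_b=\inf\{t\ge0:X_t\ge b\}$. For every $\tau\in\mathcal T$, on $\{\tau<\infty\}\cap\{S_\tau<h\}$ the event $\{\tau<\theta\}$ coincides with $\{\sup_{t>\tau}S_t\ge h\}$, so by the strong Markov property of $X$ at $\tau$ and the classical creeping-upward identity $\P_x(\tau^+_b<\infty)=e^{-\Phi(0)(b-x)}$ for spectrally negative L\'evy processes,
\begin{equation*}
\P\big(\tau<\theta\mid\F_\tau\big)=\P_{X_\tau}\big(\tau^+_{\log h}<\infty\big)=\Big(\tfrac{S_\tau}{h}\Big)^{\Phi(0)}\wedge1;
\end{equation*}
on $\{S_\tau\ge h\}$ both $(K-S_\tau)^+$ and $G(S_\tau)$ vanish since $h>K$. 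Hence, with $G$ as in \eqref{functionG} (so the exponent there is $\alpha=-\Phi(0)$), $\E_{\log s}[e^{-r\tau}(K-S_\tau)^+;\tau<\theta]=\E_{\log s}[e^{-r\tau}G(S_\tau)]$ for all $\tau\in\mathcal T$, and therefore $\overline V(s)=\sup_{\tau\in\mathcal T}\E_{\log s}[e^{-r\tau}G(S_\tau)]$ is an ordinary perpetual optimal stopping problem for $S$ with reward function $G$.

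\emph{Step 2 (value of a threshold strategy and the choice of $a^*$).} For $\tau_a$ of the form \eqref{tau_def} with $0<a<K$ I would compute $v_a(s):=\E_{\log s}[e^{-r\tau_a}G(S_{\tau_a})]$ by splitting $\{\tau_a<\infty\}$ according to whether $X$ creeps continuously onto $\log a$ or jumps strictly below it. Since $X$ has a Gaussian component and only $\mathrm{Exp}(\rho)$ negative jumps, the $r$-discounted creeping mass from level $x=\log(s/a)$ equals $\tfrac{\sigma^2}{2}\big(W^{(r)'}(x)-\Phi(r)W^{(r)}(x)\big)$, the total $r$-discounted mass equals $Z^{(r)}(x)-\tfrac{r}{\Phi(r)}W^{(r)}(x)$, and on the jump event the undershoot $\log a-X_{\tau_a}$ is $\mathrm{Exp}(\rho)$ by lack of memory, so there $S_{\tau_a}$ has the law of $ae^{-Y}$ with $Y\sim\mathrm{Exp}(\rho)$. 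Combining these three facts reproduces exactly the displayed formula with $a$ in place of $a^*$. Next, $\int_0^\infty\rho e^{-\rho y}G(ae^{-y})\,\dx y=(a/h)^{\Phi(0)}\big(\tfrac{K\rho}{\rho+\Phi(0)}-\tfrac{a\rho}{\rho+\Phi(0)+1}\big)$ (which is the integral in \eqref{a*} written in the $\alpha$-notation, $\alpha=-\Phi(0)$), and since $\Psi'(1)>0$ we have $\Phi(r)=\eta_1=1$, so $W^{(r)'}-\Phi(r)W^{(r)}=\sum_{i=2}^3C_i(\eta_i-1)e^{\eta_i(\cdot)}$. Imposing the smooth-fit condition $\partial_s v_a(s)\big|_{s=a+}=G'(a)$ at the exercise boundary (equivalently, maximising $a\mapsto v_a(s)$) and using $W^{(r)}(0)=0$, $W^{(r)'}(0+)=2/\sigma^2$, $Z^{(r)}(0)=1$ together with $\Psi(\eta_i)=r$, the problem collapses to a linear equation in $a$ whose unique root is $a^*$ as in \eqref{a*}; one checks separately that $0<a^*<K$, so that \eqref{aass} holds.

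\emph{Step 3 (verification).} Define $\overline V$ by the displayed formula for $s\ge a^*$ and by $\overline V(s)=G(s)$ for $0<s\le a^*$; the two definitions agree at $a^*$ because $W^{(r)}(0)=0$, $W^{(r)'}(0+)=2/\sigma^2$ and $Z^{(r)}(0)=1$, and they match in a $C^1$ fashion by the smooth fit of Step~2. By the Verification Theorem~\ref{verthm} it then suffices to prove that $\overline V\ge G$ on $(0,\infty)$ and that $\{e^{-rt}\overline V(S_t)\}_{t\ge0}$ is a $\P$-supermartingale, i.e. $(\mathcal{L}-r)\overline V\le0$, where $\mathcal{L}$ is the (non-local) infinitesimal generator of $S$. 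On the continuation region $(a^*,\infty)$ the defining scale-function identities give $(\mathcal{L}-r)\overline V=0$. On the stopping region $(0,a^*]$ one has $\overline V=G$ and, since $a^*<K$, there $G(s)=(K-s)(s/h)^{\Phi(0)}=Kh^{-\Phi(0)}s^{\Phi(0)}-h^{-\Phi(0)}s^{1+\Phi(0)}$; because the jump part of $\mathcal{L}$ only probes points $s'\le s\le a^*$ (where the same expression for $G$ holds) and $(\mathcal{L}-r)s^\beta=(\Psi(\beta)-r)s^\beta$, we get
\begin{equation*}
(\mathcal{L}-r)G(s)=-rKh^{-\Phi(0)}s^{\Phi(0)}-h^{-\Phi(0)}\big(\Psi(1+\Phi(0))-r\big)s^{1+\Phi(0)}\le0,
\end{equation*}
since $\Psi(\Phi(0))=0$ and $\Psi(1+\Phi(0))>r$ (because $1+\Phi(0)>1=\Phi(r)$). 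Finally $\overline V\ge G$ holds on $(a^*,K)$ by the smooth fit at $a^*$ together with the sign of $(\mathcal{L}-r)(\overline V-G)$ there, and trivially for $s\ge K$ where $G\equiv0$. Theorem~\ref{verthm} then identifies this $\overline V$ with the value function \eqref{define_V} and shows that $\tau_{a^*}$ is optimal.

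\emph{Main difficulty.} Steps~1--2 are routine applications of fluctuation theory and calculus; the substantive work is Step~3. Because $\mathcal{L}$ is non-local, the continuation-region equation and the stopping-region inequality are coupled through the jump integral --- for $s$ just above $a^*$ the integral reaches into the region where $\overline V=G$ --- so one must verify that the two pieces fit together, and the inequality $\overline V\ge G$ near the kink of $G$ at $s=K$ (and its interplay with the cancellation factor $(s/h)^{\Phi(0)}$) requires a genuine sign analysis rather than a one-line monotonicity argument.
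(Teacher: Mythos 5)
Your Steps 1--2 are essentially the paper's own argument: the reduction to $\overline V(s)=\sup_{\tau}\E_{\log s}[e^{-r\tau}G(S_\tau)]$ is obtained there by projecting on the conditional survival process $Z_t=\P(\theta>t\mid\F_t)=(h/S_t)^{\alpha}\wedge 1$ with $\alpha=-\Phi(0)$, the same identity you get from the strong Markov property; and the creeping/undershoot split, the exponential-undershoot integral, the resulting formula \eqref{Vas}, and the smooth-fit equation for $a^*$ with the check $0<a^*<K$ are exactly the paper's Proposition and the first half of its proof of Theorem \ref{price}. Where you diverge is the verification. The paper does not verify ``dominate $G$ plus supermartingale'' directly: Lemma \ref{V=G+V*} first rewrites $\overline V=G+V^*$, where $V^*$ is a stopping problem with running payoff $H=(\A-r)G$ and a local-time term at $h$ coming from the kink of $G$ there, and Theorem \ref{verthm} then asks only for the local conditions $(\A V-rV)\le -H$ and the derivative-jump matching \eqref{VGprime} for the candidate $V=\overline V_{a^*}-G$. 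In particular, Theorem \ref{verthm} is not the statement you invoke in Step 3, so your route needs its own verification lemma together with the localization/UI estimates the paper carries out (uniform integrability of the stochastic-integral and compensated-jump terms, integrability of the local time). Two of your local claims are fine but under-argued: the stopping-region inequality via $(\mathcal{L}-r)s^\beta=(\Psi(\beta)-r)s^\beta$, $\beta=\Phi(0),\,1+\Phi(0)$, is a clean equivalent of the paper's sign analysis of $H$; and $(\mathcal{L}-r)\overline V=0$ on $(a^*,\infty)$ is true, but not by ``the defining scale-function identities'' alone, since for $s$ just above $a^*$ the jump integral reaches below $a^*$ where $\overline V=G$ differs from the scale-function expression --- it works only because the memoryless undershoot makes $\int_0^\infty\rho e^{-\rho y}G(a^*e^{-y})\,\dx y$ exactly the constant the formula expects there (the paper gets harmonicity instead from the Markov-property/martingale argument).

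The genuine gap is the domination $\overline V\ge G$ on $(a^*,K)$, which your route must prove and which you only assert ``by the smooth fit at $a^*$ together with the sign of $(\mathcal{L}-r)(\overline V-G)$''. That justification fails as stated: on $(a^*,K)$ one has $(\mathcal{L}-r)(\overline V-G)=-H$, and your own two-power computation of $(\mathcal{L}-r)G$ is valid for every $s<K$ (it uses only $s<K<h$ and that jumps are downward), so $H<0$ and hence $(\mathcal{L}-r)(\overline V-G)>0$ just above $a^*$. Thus $u=\overline V-G$ satisfies $u(a^*)=u'(a^*)=0$ with $(\mathcal{L}-r)u>0$ nearby, which is the configuration a comparison/maximum principle bounds from above, not below; smooth fit gives second-order contact, not a global sign. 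Closing this requires a genuine argument --- for instance showing that $a\mapsto\overline V_a(s)$ has a single sign change of its $a$-derivative, so that $a^*$ maximizes it, and then using $\overline V_a(s)\to G(s)$ as $a\uparrow s$ --- or adopting the paper's transformation, whose verification theorem replaces the obstacle inequality by the local conditions above (even there, nonnegativity of the candidate $V$ is used tacitly when the terminal term is discarded before \eqref{fatou}, so the point must be handled with care on either route).
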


\section{Proof of the main result}\label{sec:proof}
To prove Theorem \ref{price} we start from transforming the value function
$\overline{V}(s)$.
Let $Z_t = \P(\theta > t | \F_t)$ be the conditional survival process. Additionally let us introduce a parameter $\alpha$ solving
\begin{equation}\label{alpha}
\Phi(-\alpha) = 0.
\end{equation}
Above equation has three solutions and the only one that can possibly be negative, that is,
\begin{equation}
\alpha = \frac{\rho}{2} + \frac{\mu}{\sigma^2} - \sqrt{ \left( \frac{\rho}{2} - \frac{\mu}{\sigma^2} \right)^2 + \frac{2\lambda}{\sigma^2}}.
\end{equation}
Observe that $-1 < \alpha < 0$.
Then
\begin{equation*}
  Z_t=\begin{cases}
    \left( \frac{h}{S_t} \right)^\alpha \wedge 1 , & \text{if $\alpha<0$},\\
    1, & \text{if $\alpha \geq 0$}
  \end{cases}
\end{equation*}
and
\begin{equation*}\label{G_1st}
    \E_{\log s}[e^{-r\tau}(K - S_\tau)^+; \tau < \theta ] = \E_{\log s}\left[e^{-r\tau}(K-S_\tau)^+\left( \left(\frac{h}{S_\tau} \right)^\alpha \wedge 1 \right) \right].
\end{equation*}
Hence
\begin{equation}\label{G_2nd}
\overline{V}(s) = \sup_{\tau\in \mathcal{T}} \E_{\log s}[e^{-r\tau}G(S_\tau) ],
\end{equation}
where the function $G$ is defined in \eqref{functionG}.
Above representation is very convenient from the point of
general optimal stopping theory.
We can still modify above representation though.
Observe that by \cite[Thm. 31.5, p. 208]{Sato}
\begin{equation}\label{generator}
    \A f(s) = \tilde\mu s f'(s) + \frac{1}{2} \sigma^2 s^2 f''(s) + \lambda\rho\int_0^\infty \left( f(se^{-y}) - f(s) \right) e^{-\rho y}\dx y
\end{equation}
is the infinitesimal generator of the process $X_t$ acting on $\mathcal{C}^2_0(\mathbb{R})$,
where
\begin{equation}\label{tildemu}
\tilde\mu = r  +  \frac{\lambda}{1+\rho}.
\end{equation}
Due to localization procedure $\A$ is extended generator as well acting on  $\mathcal{C}^2(\mathbb{R})$.
For $s<K$ we denote
\begin{equation}\label{H_def}
    H(s) = \A G(s) - rG(s) = \left(\frac{h}{s} \right)^\alpha (\delta s - rK)\I(s < h) - rK\I(s \geq h),
\end{equation}
where
\begin{equation}
    \delta = \alpha\sigma^2 - \frac{\lambda}{1 + \rho} + \frac{\lambda\rho}{(\rho-\alpha)(1 + \rho - \alpha)}
\end{equation}
and $\I(C)$ denotes the indicator of an event $C$.
Let us also introduce the local time $l_t^{\log (h)}(X)$ of process X at the point $\log (h)$ (see e.g. \cite{peskir3.1}):
\begin{equation}
     l_t^{\log (h)}(X) = \P - \lim_{\varepsilon \downarrow 0 } \frac{1}{2\varepsilon}\int_0^t \I (\log (h)-\varepsilon < X_u < \log (h) + \varepsilon) \dx <X>_u.
\end{equation}
The key representation of $\overline{V}(s)$ is given in next lemma.
\begin{lemma}\label{V=G+V*}
 The following holds true:
  \begin{equation}
  \overline{V}(s) = G(s) + V^*(s),
  \end{equation}
  where
  \begin{align}\label{V*}
\nonumber	V^*(s) &= \sup_{\tau\in \mathcal{T}} \E_{\log s}\left[ \int_0^\tau e^{-r\tau}H(S_u) \dx u \right. \\
&\left. + \int_0^\tau e^{-r\tau}h\left( G'(h+) - G'(h-) \right) \I (S_u = h) \dx l_t^{\log (h)}(X)\right].
\end{align}
\end{lemma}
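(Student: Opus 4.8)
**Proof proposal for Lemma \ref{V=G+V*}.**

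The plan is to apply a suitable version of the Itô–Meyer (Itô–Tanaka) formula to the discounted process $e^{-rt}G(S_t)$, where $S_t=s e^{X_t}$, and then take expectations and suprema over $\tau\in\mathcal{T}$. The function $G$ defined in \eqref{functionG} is $C^2$ on $(0,\infty)$ except at the single point $s=h$, where $G$ has a kink coming from the factor $(h/s)^\alpha\wedge 1$ (and note that $G(h)=(K-h)^+\cdot 1=0$ since $h>K$, but $G$ is only continuous, not differentiable, there); away from $h$ the function is smooth enough to feed into the generator \eqref{generator}. Writing $G = G_{\mathrm{sm}}$ plus a convex/concave correction localized at $\log h$, the Itô–Meyer formula for semimartingales gives
\begin{equation*}
e^{-r\tau}G(S_\tau) = G(s) + \int_0^\tau e^{-ru}\bigl(\A G - rG\bigr)(S_u)\,\dx u + M_\tau + \tfrac12\int_0^\tau e^{-ru}\bigl(G'(h+)-G'(h-)\bigr)\,\dx L_u^{\log h},
\end{equation*}
where $M_\tau$ is a local martingale (the stochastic-integral part, combining the Brownian integral $\int \sigma e^{-ru}G'(S_u)S_u\,\dx B_u$ and the compensated jump integral) and $L^{\log h}$ is the semimartingale local time of $X$ at $\log h$; the factor of two and the chain rule relating $\dx L^{\log h}(X)$ to occupation at $S_u=h$ are exactly what produce the $h\bigl(G'(h+)-G'(h-)\bigr)\I(S_u=h)\,\dx l_t^{\log h}(X)$ term in \eqref{V*} after the change of variables $S=se^X$. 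Identifying $\A G - rG = H$ on $\{s<K\}$ via \eqref{H_def}, and rearranging, yields
\begin{equation*}
e^{-r\tau}G(S_\tau) = G(s) + \int_0^\tau e^{-ru}H(S_u)\,\dx u + \int_0^\tau e^{-ru}h\bigl(G'(h+)-G'(h-)\bigr)\I(S_u=h)\,\dx l_u^{\log h}(X) + M_\tau.
\end{equation*}
Taking $\E_{\log s}$ of both sides, using a localization/optional-stopping argument to kill the martingale term (so that $\E_{\log s}M_{\tau}=0$, or at least $\le 0$ in the limit, which is all that is needed for the supremum identity), and then taking $\sup_{\tau\in\mathcal{T}}$ of \eqref{G_2nd} gives $\overline V(s)=G(s)+V^*(s)$ with $V^*$ as in \eqref{V*}. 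A small caveat: in \eqref{V*} the integrand is written with $e^{-r\tau}$ rather than $e^{-ru}$; I would treat this as a typo for $e^{-ru}$ (the Itô expansion naturally produces the running discount factor $e^{-ru}$), and carry the proof with $e^{-ru}$.

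The key steps, in order, are: (i) record the regularity of $G$ — $C^2$ off $h$, continuous everywhere, with a jump in the first derivative at $h$ of size $G'(h+)-G'(h-)$, and compute $\A G - rG$ on $(0,K)$ to confirm it equals $H$ as in \eqref{H_def}, which is a direct substitution into \eqref{generator} plus the elementary integral $\int_0^\infty e^{-(\rho-\alpha)y}\,\dx y$-type evaluations defining $\delta$; (ii) justify the Itô–Meyer / change-of-variables formula in the present jump-diffusion setting — the Brownian part handles the local-time correction at $h$ (the jumps of $X$ are continuous in the state variable and do not create additional singular terms, since the jump integral in \eqref{generator} only evaluates $G$ at points, never its derivative), citing e.g. \cite{peskir3.1} or the Protter/Meyer Itô–Tanaka formula; (iii) control the local-martingale part by a sequence of localizing stopping times $T_n\uparrow\infty$, apply the formula up to $\tau\wedge T_n$, take expectations, and pass $n\to\infty$ using (say) that $G$ is bounded and $e^{-ru}\le 1$, so monotone/dominated convergence applies to the integral terms while $\E M_{\tau\wedge T_n}=0$; (iv) conclude the supremum identity.

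The main obstacle I anticipate is step (ii)–(iii): making the local-time term rigorous and correctly tracking the constant. One has to be careful that the kink of $G$ lives at $S=h$, i.e. at $X=\log h$, translate the local time of $S$ (or of $G(S)$) into the local time $l^{\log h}(X)$ of the driving Lévy process — this is where the factor $h$ comes from, via $\dx S = S\,\dx X + \tfrac12 S\,\dx\langle X\rangle + \cdots$ so that the coefficient of the local time picks up $S=h$ — and confirm that the symmetric local time (the $\P$-$\lim \frac{1}{2\varepsilon}\int\I(|X_u-\log h|<\varepsilon)\dx\langle X\rangle_u$ appearing in the statement) is the right normalization so that the prefactor is $h\bigl(G'(h+)-G'(h-)\bigr)$ and not, say, half of it. A secondary technical point is verifying that $\E_{\log s}M_{\tau\wedge T_n}=0$ genuinely holds and that no boundary term at infinity survives; since $G$ is bounded by $K$ and the discount is strictly positive when $r>0$, this is routine, but in the degenerate case $r=0$ (allowed by the hypotheses) one must instead use that $\theta<\infty$ a.s. forces the relevant expectations to be finite — a point worth a sentence. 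Once the expansion and the martingale-vanishing are in hand, taking the supremum is immediate and the lemma follows.
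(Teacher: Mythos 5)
Your route is the paper's route: apply the change-of-variable formula of \cite{peskir3.1} to $e^{-ru}G(S_u)$, identify $(\A G-rG)(s)=H(s)$ away from the kink, collect the Brownian integral, the jump sum and its compensator into a (local) martingale, kill its expectation by optional stopping, and take the supremum over $\tau\in\mathcal{T}$; this is exactly the paper's expansion \eqref{G_ito} (and yes, the $e^{-r\tau}$ inside \eqref{V*} is to be read as $e^{-ru}$). The difference is therefore not in the decomposition but in how the martingale term is disposed of, and that is where your argument has a genuine gap. You localize and then invoke ``monotone/dominated convergence\dots since $G$ is bounded and $e^{-ru}\le 1$''. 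Boundedness of $H$ does dominate the drift integral (by $\|H\|_\infty/r$ when $r>0$), but it supplies no dominating function for the local-time integral: $l^{\log(h)}_\infty(X)$ is in general infinite, and $e^{-ru}\le 1$ is useless there. The integrability
\begin{equation*}
\E_{\log s}\left[\int_0^{\infty}e^{-ru}\,\dx l_u^{\log (h)}(X)\right]<\infty
\end{equation*}
is precisely the nontrivial point, and it is where the paper spends its effort (its bounds \eqref{supUIa}--\eqref{supUIb}): it splits time at the successive downward passage epochs $\tau_k(a)$, uses upward skip-freeness and the Markov property to bound the expectation by $\E_{\log s}\bigl[\int_0^{\tau_a}e^{-ru}\,\dx l^{\log(h)}_u(X)\bigr]\sum_{k\ge 0}\bigl(\E_{\log s}e^{-r\tau_a}\bigr)^{k}$, and concludes with $\E_{\log s}\,l^{\log(h)}_{\tau_a}(X)<\infty$ from \cite[Cor. 3.4]{Bo}. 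This upgrades the martingale part to a uniformly integrable martingale, so optional stopping applies at any $\tau\in\mathcal{T}$ with no localization at all. Your scheme can be repaired more cheaply than that---the local-time integrand is a fixed constant times the increasing measure $\dx l^{\log(h)}_u(X)$, so monotone convergence applies along $T_n\uparrow\infty$ and finiteness of the limit is forced by finiteness of the other two expectations---but some such argument must be supplied; as written the dominated-convergence step is unjustified, and this, not the normalization constant of the local time, is where the real work of the lemma lies.

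Two smaller remarks. First, the obstacle you single out (translating the local time of $S$ into that of $X$ and tracking the factor $h$ and the symmetric normalization) is taken care of automatically by applying the surface-local-time formula of \cite{peskir3.1} directly to $G(e^{X_t})$, as the paper does, so no such translation is needed. Second, watch the regularity bookkeeping: under the standing assumption $h>K$ the factor $(K-s)^+$ vanishes identically in a neighbourhood of $h$, so the genuine kink of $G$ in \eqref{functionG} sits at $s=K$, not at $s=h$; your claim that $G$ is non-differentiable at $h$ repeats the paper's own slip in \eqref{g'diff} (which is computed as if $h<K$). A fully rigorous version of this step should include the local-time correction at $\log K$ (or note explicitly why it is harmless); this affects your write-up and the paper's proof equally, but it should not be passed over as merely ``where the factor $h$ comes from''.
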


\begin{proof}
By using the change-of-variable formula \cite[Thm. 3.1]{peskir3.1} we have
\begin{align}
	&e^{-r t}G(S_t)=e^{-r t}G(e^{X_t})\nonumber\\ &\quad=  G(e^x)  + \int_0^t e^{-r u} G'(e^{X_u})e^{X_u} \dx X_u^c - \int_0^t r e^{-r u} G(e^{X_u}) \dx u  \nonumber\\
	\nonumber &\quad + \frac{1}{2}\sigma^2 \int_0^t e^{-r u} \left[ G'(e^{X_u})e^{X_u} + G''(e^{X_u})e^{2X_u} \right] \dx <X_u^c> \\
	\nonumber &\quad + \sum_{u \leq t} e^{-r u} \left[ G(e^{X_u}) - G(e^{X_u-}) \right]  \\
	\nonumber &\quad+ \int_0^t e^{-r u}e^{X_u}\left( G'(h+) - G'(h-) \right) \I (e^{X_u} = h) \dx l_t^{\log (h)}(X)  \\
	\nonumber & = G(s) + \int_0^t e^{-r u} \left[ \mu S_u G'(S_u) - r G(S_u) + \frac{1}{2} \sigma^2 S_u G'(S_u) + \frac{1}{2} \sigma^2 S_u^2 G''(S_u) \right] \dx u  \\
	\nonumber &\quad + \int_0^t e^{-r u} S_u G'(S_u) \sigma \dx B_u + \sum_{u \leq t} e^{-r u} \left[ G(S_u) - G(S_u-) \right]  \\
	\nonumber &\quad +  \int_0^t e^{-r u}h\left( G'(h+) - G'(h-) \right) \I (S_u = h) \dx l_t^{\log (h)}(X)  \\
	\nonumber &\quad = G(s) + \int_0^t e^{-r u} \left( \A G - rG\right)(S_u) \dx u \\
	\nonumber &\quad + \int_0^t e^{-r u}h\left( G'(h+) - G'(h-) \right) \I (S_u = h) \dx l_t^{\log (h)}(X)  \\
	\nonumber & \quad+ \int_0^t e^{-r u} S_u G'(S_u) \sigma \dx B_u  + \sum_{u \leq t} e^{-r u} \left[ G(S_u) - G(S_u-) \right] \nonumber \\
	& \quad - \lambda \int_0^t \int_0^\infty \left( G(S_ue^{-y}) - G(S_u) \right)\rho e^{-\rho y}\dx y \dx u. \label{G_ito}
\end{align}

Furthermore, by \cite[eq. (4.34), p. 47]{jacod} and \cite[Thm. 3.4, p. 18 and Rem. 3.5, p. 20]{GS}
the sum of three last increments of \eqref{G_ito}
is a zero-mean martingale. In fact it is a uniformly integrable (UI) martingale.
Indeed, this follows from triangle inequality, fact that $\int_0^t e^{-r u} S_u G'(S_u) \sigma \dx B_u$ is UI martingale
and that
\begin{equation}\label{supUIa}
    \E_{\log s} \left[ \sup_{t \geq 0}  \int_0^{t} e^{-r u}\left| H(S_u) \right| \dx u \right] < \infty
\end{equation}
and
\begin{equation}\label{supUIb}
    \E_{\log s} \left[ \sup_{t \geq 0}  \int_0^{t } e^{-r u}\left|G'(h+) - G'(h-)\right| \I (S_u = h) \dx l_u^{\log (h)}(X) \right] < \infty.
\end{equation}
To show \eqref{supUIa} observe that
from equation (\ref{H_def}) it follows that for $0 < s < h$ function $H(s)$ is continuous and hence bounded and for $s \geq h$ the function $H(s)$ is constant.
To prove \eqref{supUIb} note that
\begin{equation}\label{g'diff}
     G'(h+) - G'(h-) = -\alpha\frac{h-k}{h}.
\end{equation}
Furthermore,
\begin{align*}
 \E_{\log s}\left[ \sup_{t \geq 0} \int_0^{t } e^{-r u} \dx l_u^{\log (h)}(X) \right]\leq  \E_{\log s} \left[ \int_0^{\infty } e^{-r u} \dx l_u^{\log (h)}(X) \right]<+\infty.
\end{align*}
Indeed, defining the sequence of consecutive downward passage times of $a$ by
$\tau_1(a)=\tau_a$ and $\tau_{k+1}(a)=\inf\{t >\tau_k: S_t\leq a\}$ and recalling that our price process
$S_t$ is upward skip-free (hence passing upward $\log s$ in a continuous way), from the Markov property of $S_t$ we have
we have
\begin{align*}
&\E_{\log s} \left[ \int_0^{\infty } e^{-r u} \dx l_u^{\log (h)}(X) \right]\leq
\E_{\log s} \left[\int_0^{\tau_a} e^{-r u} \dx l_u^{\log (h)}(X) \right] \left(1+\sum_{k=1}^\infty \E_{\log s} e^{-r\tau_k(a)}\right)\\
&\leq \E_{\log s} \left[\int_0^{\tau_a} e^{-r u} \dx l_u^{\log (h)}(X) \right] \left(1+\sum_{k=1}^\infty \left(\E_{\log s} e^{-r\tau_a}\right)^k\right)<+\infty
\end{align*}
because
\[\E_{\log s} \left[\int_0^{\tau_a} e^{-r u} \dx l_u^{\log (h)}(X) \right]\leq \E_{\log s}  l_{\tau_a}^{\log (h)}(X) <+\infty\]
by \cite[Cor. 3.4]{Bo}.

The proof of the main assertion follows now from Optional Stopping Theorem.
\end{proof}

Next step is the Verification Theorem which allows to identify $V^*(s)$.
\begin{theorem}\label{verthm}
Suppose that function $V\in \mathcal{C}^2(\mathbb{R})$ except the point $h$ and a point $a$
where it is of class $\mathcal{C}^1(\mathbb{R})$. Assume that $V$ satisfies the following HJB system of equations
\begin{align}
    (\A V - rV)(s) &\leq -H(s),
    \label{system1} \\
    V'(h+) - V'(h-) &= - (G'(h+) - G'(h-)).\label{VGprime}
\end{align}
Then $V(s)\geq V^*(s)$.
\end{theorem}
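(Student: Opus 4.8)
The plan is to run the standard verification argument via the change-of-variable (It\^o--Meyer--Tanaka) formula: starting from a candidate $V$ as in the statement, apply \cite[Thm. 3.1]{peskir3.1} to the discounted process $e^{-rt}V(S_t)=e^{-rt}V(e^{X_t})$, exactly as was done for $G$ in the proof of Lemma \ref{V=G+V*}, insert the two HJB relations \eqref{system1}--\eqref{VGprime} into the finite-variation part, and then take expectations along an arbitrary $\tau\in\mathcal T$ and let the time horizon tend to infinity.

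In more detail: because $V$ is $\mathcal C^2$ away from $a$ and $h$ and $\mathcal C^1$ at $a$, the change-of-variable formula for $e^{-rt}V(S_t)$ has the same structure as \eqref{G_ito}, namely a discounted generator term $\int_0^t e^{-ru}(\A V-rV)(S_u)\dx u$, a Brownian integral $\int_0^t e^{-ru}\sigma S_uV'(S_u)\dx B_u$, the compensated jump term, a local-time term at $\log a$ whose coefficient $a(V'(a+)-V'(a-))$ vanishes since $V$ is $\mathcal C^1$ there, and a local-time term at $\log h$ equal to $\int_0^t e^{-ru}h(V'(h+)-V'(h-))\I(S_u=h)\dx l_u^{\log(h)}(X)$. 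By \eqref{system1} the generator term is bounded above by $-\int_0^t e^{-ru}H(S_u)\dx u$, and by \eqref{VGprime} the $\log h$ local-time term equals $-\int_0^t e^{-ru}h(G'(h+)-G'(h-))\I(S_u=h)\dx l_u^{\log(h)}(X)$. Writing $N_t$ for the sum of the Brownian and compensated-jump terms (a local martingale with $N_0=0$), one obtains, for every $t\ge 0$,
\[
V(s)\ \ge\ e^{-rt}V(S_t)+\int_0^t e^{-ru}H(S_u)\dx u+\int_0^t e^{-ru}h(G'(h+)-G'(h-))\I(S_u=h)\dx l_u^{\log(h)}(X)-N_t .
\]

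Next I would fix $\tau\in\mathcal T$, substitute $t\wedge\tau\wedge\varrho_n$ for $t$, where $\varrho_n\uparrow\infty$ is a localizing sequence reducing $N$ to a true martingale, apply $\E_{\log s}$ and discard the zero-mean martingale, obtaining $V(s)\ge\E_{\log s}[e^{-r(t\wedge\tau\wedge\varrho_n)}V(S_{t\wedge\tau\wedge\varrho_n})]+\E_{\log s}[\int_0^{t\wedge\tau\wedge\varrho_n}(\cdots)]$. Letting $n\to\infty$ and then $t\to\infty$: the $H$-integral converges by dominated convergence, the dominating function being the one in \eqref{supUIa}; the $\log h$ local-time integral is non-decreasing in $t$ (since $h(G'(h+)-G'(h-))=-\alpha(h-K)>0$ by \eqref{g'diff}, as $\alpha<0$ and $h>K$) and converges by monotone convergence, finiteness being \eqref{supUIb}; and for the terminal term one uses that the relevant candidate $V$ is nonnegative (indeed $V^*\ge 0$, since $\tau=0$ is admissible), so Fatou gives $\liminf_t\E_{\log s}[e^{-r(t\wedge\tau)}V(S_{t\wedge\tau})]\ge 0$. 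Combining, $V(s)\ge\E_{\log s}\big[\int_0^\tau e^{-ru}H(S_u)\dx u+\int_0^\tau e^{-ru}h(G'(h+)-G'(h-))\I(S_u=h)\dx l_u^{\log(h)}(X)\big]$, and taking the supremum over $\tau\in\mathcal T$ yields $V(s)\ge V^*(s)$.

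The bookkeeping in the change-of-variable formula and the applications of the convergence theorems are routine. The genuinely delicate points are, first, justifying that the (local) martingale part contributes nothing --- i.e. constructing the localizing sequence and controlling the integrability needed to pass to the limit, the analogue for $V$ and $V'$ of the uniform-integrability estimates \eqref{supUIa}--\eqref{supUIb} --- and, second, handling the terminal term $\E_{\log s}[e^{-r(t\wedge\tau)}V(S_{t\wedge\tau})]$ as $t\to\infty$, which is where a lower bound on $V$ (nonnegativity, or an appropriate growth condition on the candidate) is required.
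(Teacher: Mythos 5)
Your proposal is correct and follows essentially the same route as the paper: the change-of-variable formula applied to $e^{-rt}V(S_t)$, insertion of \eqref{system1}--\eqref{VGprime} (with the local-time term at $a$ vanishing by the $\mathcal{C}^1$ fit), localization of the martingale part, optional stopping, and passage to the limit using \eqref{supUIa}--\eqref{supUIb}. If anything, you are more explicit than the paper about the terminal term $e^{-r(t\wedge\tau)}V(S_{t\wedge\tau})$, which the paper simply discards --- implicitly using the same nonnegativity of the candidate $V$ that you invoke via Fatou.
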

\begin{proof}
First, we apply the change-of-variable formula to $e^{-r\tau}V(S_t)$ to get
\begin{align}\label{V_ito}
	e^{-r t}V(e^{X_t}) &=  V(s) + \int_0^t e^{-r u} \left( \A V - rV\right)(S_u) \dx u \\
	\nonumber & + \int_0^t e^{-r u}h\left( V'(h+) - V'(h-) \right) \I (S_u = h) \dx l_u^{\log (h)}(X) \\
	&+\int_0^t e^{-r u}a\left( V'(a+) - V'(a-) \right) \I (S_u = a) \dx l_u^{\log (a)}(X)\nonumber\\
	\nonumber & + \int_0^t e^{-r u} S_u V'(S_u) \sigma \dx B_u  + \sum_{u \leq t} e^{-r u} \left[ V(S_u) - V(S_u-) \right]\\
	\nonumber & - \lambda \int_0^t \int_0^\infty \left( V(S_ue^{-y}) - V(S_u) \right)\rho e^{-\rho y}\dx y \dx u.
\end{align}

Note that $\int_0^t e^{-r u}a\left( V'(a+) - V'(a-) \right) \I (S_u = a) \dx l_u^{\log (a)}(X)=0$ due to assumed smoothness
of $V$ at $a$.
Further, let $L_t$ is the sum of three last increments of \eqref{V_ito}.
Note that $L_t$ is a mean-one local martingale (see \cite[eq. (4.34), p. 47]{jacod}). Using \eqref{system1}
and \eqref{VGprime}, we can
conclude that
\begin{align}
	V(s) + L_t &\geq e^{-r t}V(e^{X_t}) + \int_0^t e^{-r u} H(S_u) \dx u \\
	\nonumber &+ \int_0^t e^{-r u}h\left( G'(h+) - G'(h-) \right) \I (S_u = h) \dx l_u^{\log (h)}(X)  \\
	\nonumber &\geq \int_0^t e^{-r u} H(S_u) \dx u + \int_0^t e^{-r u}h\left( G'(h+) - G'(h-) \right) \I (S_u = h) \dx l_u^{\log (h)}(X).
\end{align}

Let $(\kappa_n)_{n \in \N}$ be a localizing sequence for $L_t$. Using Optional Stopping Theorem, we can write for any stopping time $\tau$:
\begin{align}
	& \E_s \left[ \int_0^{\tau \wedge \kappa_n} e^{-r u} H(S_u) \dx u + \int_0^{\tau \wedge \kappa_n} e^{-r u}h\left( G'(h+) - G'(h-) \right) \I (S_u = h) \dx l_u^{\log (h)}(X) \right] \\
\nonumber &	\quad  \leq V(s) + \E_s [L_{\tau \wedge \kappa_n}] = V(s).
\end{align}

Now taking the limit with $n$ tending to infinity and applying Lebesgue dominated convergence theorem, we get:
\begin{equation}\label{fatou}
        \E_s \left[ \int_0^{\tau} e^{-r u} H(S_u) \dx u + \int_0^{\tau } e^{-r u}h\left( G'(h+) - G'(h-) \right) \I (S_u = h) \dx l_u^{\log (h)}(X) \right] \leq  V(s)
\end{equation}
which completes the proof.
\end{proof}

Now the main idea is now to find the value function
\[\overline{V}_a(s)=\E_s[e^{-r\tau_a}(K - S_{\tau_a}); \tau_a < \theta ],\qquad s>a,\]
when the exercise time is the first passage downward time $\tau_a$ of the asset price defined in \eqref{tau_def}.
We let
\begin{equation}\label{Vs}
V(s)=\overline{V}_{a^*}(s)-G(s)\quad \text{for $s>a$ and $0$ otherwise}
\end{equation}
for the unique $0<a^*<K$ (see assumption \eqref{aass}) solving
$V_{a^*}(s) \big|_{s = a^*+} = 0$ and $V_{a^*}'(s) \big|_{s = a^*+} = 0$. 
In the final step we will show that  $V(s)$ satisfies
all HJB conditions of the Verification Theorem \ref{verthm} and hence we get the assertion of main Theorem
\ref{price}.

We will prove now the following proposition that is interesting in itself.
\begin{proposition}
The value $\overline{V}_a(s)$ equals
\begin{align}
    	    \overline{V}_a(s) &=\frac{\sigma^2}{2}\left[ W^{(r)'}\left( \log\frac{s}{a} \right) - \Phi(r)W^{(r)}\left( \log\frac{s}{a} \right) \right]G(a)  \nonumber\\
			\nonumber\\ &+\left[ Z^{(r)}\left( \log\frac{s}{a} \right) - \frac{\sigma^2}{2} W^{(r)'}\left( \log\frac{s}{a} \right) - W^{(r)}\left( \log\frac{s}{a} \right)\left( \frac{r}{\Phi(r)} - \frac{\Phi(r)\sigma^2}{2} \right)  \right]\nonumber\\
 &\times \int_0^\infty \rho e^{-\rho y}G\left( a e^{-y} \right)\dx y.\label{Vas}
    	\end{align}
\end{proposition}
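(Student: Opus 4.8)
The plan is to compute $\overline V_a(s)=\E_{\log s}[e^{-r\tau_a}G(S_{\tau_a})]$, using the representation \eqref{G_2nd}, by reducing it to two classical first--passage identities for spectrally negative L\'evy processes together with the lack--of--memory property of the exponential jumps. By spatial homogeneity of $X$ we may translate so that $X$ starts from $x:=\log(s/a)>0$ and $\tau_a$ is the first entrance time of $X$ into $(-\infty,0]$; then $S_{\tau_a}=a\,e^{X_{\tau_a}}$. Since $X$ is spectrally negative with Gaussian component $\sigma>0$, on $\{\tau_a<\infty\}$ exactly one of two things happens: either $X$ \emph{creeps}, i.e. $X_{\tau_a}=0$ and $S_{\tau_a}=a$; or $X$ passes below $0$ by a downward jump, i.e. $X_{\tau_a}<0$. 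For $r>0$ the event $\{\tau_a=\infty\}$ contributes nothing since $e^{-r\tau_a}=0$ there, and $0\le G\le K$ makes every expectation below finite. Splitting on these two events,
\[
\overline V_a(s)=G(a)\,\E_x\!\left[e^{-r\tau_a};X_{\tau_a}=0\right]+\E_x\!\left[e^{-r\tau_a}\,G\!\left(ae^{X_{\tau_a}}\right);X_{\tau_a}<0\right].
\]

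For the first term I would invoke the creeping identity $\E_x[e^{-r\tau_a};X_{\tau_a}=0]=\tfrac{\sigma^2}{2}\bigl(W^{(r)\prime}(x)-\Phi(r)W^{(r)}(x)\bigr)$; with $x=\log(s/a)$ this is exactly the first line of \eqref{Vas}. For the second term, the jump that carries $X$ below $0$ is a jump of the compound Poisson component, whose jump sizes are $\mathrm{Exp}(\rho)$; hence by the compensation formula (equivalently, by the lack--of--memory property), conditionally on $\{X_{\tau_a}<0\}$ the overshoot $-X_{\tau_a}$ is $\mathrm{Exp}(\rho)$--distributed and independent of $\tau_a$. Concretely, with $u^{(r)}(x,y)$ the $r$--resolvent density of $X$ killed on entering $(-\infty,0)$ (a known expression in terms of $W^{(r)}$) and L\'evy measure $\lambda\rho e^{-\rho v}\dx v$, the compensation formula yields, for bounded measurable $g$,
\[
\E_x\!\left[e^{-r\tau_a} g(X_{\tau_a});X_{\tau_a}<0\right]=\Bigl(\textstyle\int_0^\infty u^{(r)}(x,y)\,\lambda e^{-\rho y}\dx y\Bigr)\int_0^\infty g(-z)\,\rho e^{-\rho z}\dx z .
\]
Taking $g\equiv1$ identifies the prefactor as $\E_x[e^{-r\tau_a};X_{\tau_a}<0]=\E_x[e^{-r\tau_a}]-\E_x[e^{-r\tau_a};X_{\tau_a}=0]$, and combining the classical identity $\E_x[e^{-r\tau_a}]=Z^{(r)}(x)-\tfrac{r}{\Phi(r)}W^{(r)}(x)$ with the creeping identity shows this prefactor equals the bracket on the second line of \eqref{Vas}. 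Taking $g(-z)=G(ae^{-z})$ produces the factor $\int_0^\infty \rho e^{-\rho y}G(ae^{-y})\dx y$. Adding the two contributions and restoring $x=\log(s/a)$ gives \eqref{Vas}.

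The step I expect to be the main obstacle is the factorization of the overshoot functional, i.e. the rigorous justification that, conditionally on passing below $0$ by a jump, the overshoot is exponential and independent of the passage time; I would carry this out via the compensation / L\'evy--system formula with the explicit resolvent density of the killed process, as sketched above, rather than by a pathwise argument. A secondary point worth care is keeping track of the normalizations of $W^{(r)}$ and $\Phi(r)$ so that the $\Phi(r)$--terms appear with exactly the signs displayed in \eqref{Vas}; the pure Black--Scholes special case ($\lambda=0$), in which $X$ always creeps and the overshoot term disappears, is a convenient consistency check.
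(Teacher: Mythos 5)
Your proposal is correct and follows essentially the same route as the paper: split on $\{S_{\tau_a}=a\}$ (creeping) versus $\{S_{\tau_a}<a\}$ (jump), apply the two classical identities $\E_y[e^{-r\tau_0^-}]=Z^{(r)}(y)-\tfrac{r}{\Phi(r)}W^{(r)}(y)$ and $\E_y[e^{-r\tau_0^-};X_{\tau_0^-}=0]=\tfrac{\sigma^2}{2}\bigl(W^{(r)\prime}(y)-\Phi(r)W^{(r)}(y)\bigr)$, and factor out the exponential overshoot. Your compensation-formula justification of the independence of the overshoot from $\tau_a$ is simply a more careful rendering of the paper's brief lack-of-memory argument, so the two proofs coincide in substance.
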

\begin{proof}
We start the proof from showing that
    \begin{equation}\label{E_D_2a}
        \E \left[ e^{-r\tau_a}; S_{\tau_a}=a \right] = \frac{\sigma^2}{2}\left[ W^{(r)'}\left( \log\frac{s}{a} \right) - \Phi(r)W^{(r)}\left( \log\frac{s}{a} \right) \right]
    \end{equation}
    and
    \begin{align}\label{E_D_2}
	&\E \left[ e^{-r\tau_a}; S_{\tau_a}<a \right] \\
	\nonumber & \quad  = \left[ Z^{(r)}\left( \log\frac{s}{a} \right) - \frac{\sigma^2}{2} W^{(r)'}\left( \log\frac{s}{a} \right) - W^{(r)}\left( \log\frac{s}{a} \right)\left( \frac{r}{\Phi(r)} - \frac{\Phi(r)\sigma^2}{2} \right)  \right].
\end{align}

Indeed, denoting $\tau_b^-=\inf\{t\geq 0: X_t<b\}$, from  \cite{kazu} we have
	\begin{equation}\label{E_D_Y1}
	    \E_y [e^{-r\tau_0^-}] = Z^{(r)}(y) - \frac{r}{\Phi(r)}W^{(r)}(y)
	\end{equation}
and from  \cite{Kyprianou}:
		\begin{equation}\label{E_D_Y2}
				\mathbb{E}_y [e^{-r\tau_0^-}; X_{\tau_0^-}=0] =  \frac{\sigma^2}{2}\left[ W^{(r)'}\left( y \right) - \Phi(r)W^{(r)}\left( y \right) \right].
		\end{equation}
Now \eqref{E_D_2a} and \eqref{E_D_2} follows directly from the fact that $S_t=se^{X_t}$.

In order to find the option price $\overline{V}(S) =  \E\left[ \text{e}^{-r\tau} G(S_\tau) \right]$ we consider two possible scenarios:
either the underlying price hits the threshold $a$ or it drops below threshold $a$ by a jump. If $S_t$ creeps at $a$,
then $G(S_\tau) = G(a)$ and therefore
 \begin{align}
	&\E\left[ \text{e}^{-r\tau} G(S_\tau); S_\tau = a \right] = \E\left[ \text{e}^{-r\tau} ; S_\tau = a \right]G(a) \\
	\nonumber & \quad  = \frac{\sigma^2}{2}\left[ W^{(r)'}\left( \log\frac{s}{a} \right) - \Phi(r)W^{(r)}\left( \log\frac{s}{a} \right) \right]G(a).
\end{align}

In the second scenario $X_{\tau_a} < \log(a)$ and the undershoot $\log(a) - X_{\tau_a}$ has exponential distribution with parameter $\rho$
by the lack of memory of this distribution. Thus
\begin{align}
	 &\E\left[ \text{e}^{-r\tau} G(S_\tau); S_\tau < a \right] = \E\left[ \text{e}^{-r\tau} ; S_\tau < a \right]\E \left[ G(e^{X_\tau}); S_\tau < a \right] \\
	 \nonumber \quad & =  \E\left[ \text{e}^{-r\tau} ; S_\tau < a \right] \E \left[ G(e^{\log(a) - U }); S_\tau < a \right]  \\
	\nonumber \quad & = \left[ Z^{(r)}\left( \log\frac{s}{a} \right) - \frac{\sigma^2}{2} W^{(r)'}\left( \log\frac{s}{a} \right) - W^{(r)}\left( \log\frac{s}{a} \right)\left( \frac{r}{\Phi(r)} - \frac{\Phi(r)\sigma^2}{2} \right)  \right] \\
	\nonumber \quad & \times \int_0^\infty \rho e^{-\rho y}G\left( ae^{-y} \right)\dx y.
\end{align}

Using  $\overline{V}_a(s)=\E_s[e^{-r\tau_a}G(S_{\tau_a})]$ and above identities completes the proof.
\end{proof}
We are now ready to give the proof of the main result of this paper.

\textbf{Proof of Theorem \ref{price}}.
We recall that
\begin{equation}\label{V_def}
    V(s) =
    \begin{cases}
			\overline{V}_{a^*}(s) - G(s), & \text{if}\ s > a \\
			0, & \text{if}\ 0 < s \leq a \\
	\end{cases}
\end{equation}
for $\overline{V}_a(s)$ defined in (\ref{Vas}) and $a^*$ solving $V_{a^*}'(s) \big|_{s = a^*+} = 0$.
We will show that all equation in HJB system given in Verification Theorem \ref{verthm}.

By \cite[Thm.3.10]{scalereview} both scale functions $W^{(r)}$ and $Z^{(r)}$ belongs to $\mathcal{C}^2(\mathbb{R})$.
Hence by \eqref{Vas},  $\overline{V}_{a^*}(s)\in \mathcal{C}^2(\mathbb{R}\setminus\{a^*, h\})$ and of class $\mathcal{C}^1(\mathbb{R})$
at $a^*$ by the choice of $a^*$. Moreover,
\begin{equation}
    V'(h+) - V'(h-) = \overline{V}'(h+) - \overline{V}'(h-) - (G'(h+) - G'(h-)) = - (G'(h+) - G'(h-))
\end{equation}
and hence \eqref{VGprime} is satisfied.

Observe that the only candidate for $0<a<K$ which satisfies condition \newline $V_a'(s) \big|_{s = a+} = 0$ is given
as a solution of the following equation
    \begin{multline*}
        \frac{1}{a} \left(\frac{h}{a}\right)^\alpha \left[ \frac{\sigma^2}{2}(K-a)\left[ C_2\eta_2(\eta_2-1) + C_3\eta_3(\eta_3-1) \right]  (1-\alpha)a + \alpha K  \right. \\
	\left. + \rho \left( \frac{K}{\rho - \alpha} + \frac{a}{\rho - \alpha + 1} \right) \times \sum_{i=2}^3C_i\eta_i\left( r\left( \frac{1}{\eta_i} - 1 \right) - \frac{\sigma^2}{2}\left( \eta_i - 1 \right) \right) \right] = 0
    \end{multline*}
and hence is given in \eqref{a*}.
We still have to verify if $0 < a^* < K$.
To do so we rewrite representation (\ref{a*}) of $a^*$ as follows:
\begin{align}\label{a* mod}
    a^* &
    = K + K\frac{ 1 + \frac{\rho}{(\rho - \alpha)(\rho - \alpha +1)}\sum_{i=2}^3C_i\eta_i\left[ r\left( \frac{1}{\eta_i} - 1 \right) - \frac{\sigma^2}{2}\left( \eta_i - 1 \right)\right] }{ (\alpha-1) + \sum_{i=2}^3C_i\eta_i \left[ \frac{r\rho}{\rho - \alpha + 1} \left( \frac{1}{\eta_i} - 1 \right) + \frac{\sigma^2}{2}(\eta_i - 1)\left( 1 - \frac{\rho}{\rho - \alpha + 1} \right) \right] }.
\end{align}
Further,
\begin{equation}
	C_2\eta_2(\eta_2-1) = \frac{\rho + 1}{\omega}(\eta_2 + \rho), \qquad
	C_3\eta_3(\eta_3-1) = -\frac{\rho + 1}{\omega}(\eta_3 + \rho) \label{eta3-1}
\end{equation}
and
\begin{equation}\label{eta rho}
    -\eta_2 < \rho < \eta_3;
\end{equation}
see also \cite{kazu}.
Therefore we can see that the right hand sides of equations in 
(\ref{eta3-1}) are positive, which means that also the left hand sides are also positive. As both $\eta_2$ and $\eta_3$ are negative, it means that both $C_2$ and $C_3$ are also negative.
By virtue of the fact that $\sum_{i=2}^3C_i\left[ r\left( \frac{1}{\eta_i} - 1 \right) - \frac{\sigma^2}{2}\left( \eta_i - 1 \right) \right] = 1$ we can see that:
\begin{equation}
    C_2\left[ r\left( \frac{1}{\eta_2} - 1 \right) - \frac{\sigma^2}{2}\left( \eta_2 - 1 \right) \right] = - C_3\left[ r\left( \frac{1}{\eta_3} - 1 \right) - \frac{\sigma^2}{2}\left( \eta_3 - 1 \right) \right].
\end{equation}
Now by (\ref{eta rho})
\begin{equation}
    C_2\eta_2\left[ r\left( \frac{1}{\eta_2} - 1 \right) - \frac{\sigma^2}{2}\left( \eta_2 - 1 \right) \right] > - C_3\eta_3\left[ r\left( \frac{1}{\eta_3} - 1 \right) - \frac{\sigma^2}{2}\left( \eta_3 - 1 \right) \right].
\end{equation}
which gives
\begin{equation}
    \sum_{i=2}^3C_i\eta_i\left[ r\left( \frac{1}{\eta_i} - 1 \right) - \frac{\sigma^2}{2}\left( \eta_i - 1 \right) \right] > 0.
\end{equation}
This leads to the conclusion that the numerator of the rhs of (\ref{a* mod}) is strictly positive.
Moreover, since $\eta_1$ and $\eta_2$ are negative and $-1<\alpha<0$, we know that its denominator is negative.
This gives immediately that $a^*<K$. To show that
$ a^*>0 $ we need to verify that the numerator plus the denominator is smaller than $0$, that is, that
\begin{align*}
    &1 + \frac{\rho}{(\rho - \alpha)(\rho - \alpha +1)}\sum_{i=2}^3C_i\eta_i\left[ r\left( \frac{1}{\eta_i} - 1 \right) - \frac{\sigma^2}{2}\left( \eta_i - 1 \right)\right]  \\
    \nonumber &+(\alpha-1) + \sum_{i=2}^3C_i\eta_i \left[ \frac{r\rho}{\rho - \alpha + 1} \left( \frac{1}{\eta_i} - 1 \right) + \frac{\sigma^2}{2}(\eta_i - 1)\left( 1 - \frac{\rho}{\rho - \alpha + 1} \right) \right]  \\
    &= \alpha + \sum_{i=2}^3C_i\eta_i \left[ \frac{\rho}{\rho - \alpha}r\left( \frac{1}{\eta_i} - 1 \right) +  \frac{\sigma^2}{2}\left( \eta_i - 1 \right)\left( 1 - \frac{\rho}{\rho - \alpha}\right) \right]<0.
\end{align*}
This follows from the fact that $\alpha, C_2, C_3, \eta_2, \eta_3$ are all strictly negative.

Now, note that $e^{-rt\wedge \tau_{a^*}}V(S_{t\wedge \tau_{a^*}})$ is a martingale. Indeed,
from \cite[Rem. 5]{Avram} we know that $e^{-rt\wedge \tau_{a^*}\wedge \tau_b^+}W^{(r)}(S_{t\wedge \tau_{a^*}\wedge \tau_b^+})$ and
$e^{-rt\wedge \tau_{a^*}\wedge \tau_b^+}Z^{(r)}(S_{t\wedge \tau_{a^*}\wedge \tau_b^+})$ are martingales where $\tau_b^+\inf\{t\geq 0: S_t\geq b\}$.
Furthermore, by \eqref{E_D_2a}, the process \newline $e^{-rt\wedge \tau_{a^*}\wedge \tau_b^+}W^{(r)'}(S_{t\wedge \tau_{a^*}})$ is a martingale as well since
$e^{-rt\wedge \tau_{a^*}\wedge \tau_b^+}A(S_{t\wedge \tau_{a^*}})$ is martingale where $$A(s)=\frac{\sigma^2}{2}\left[ W^{(r)'}\left( \log\frac{s}{a} \right) - \Phi(r)W^{(r)}\left( \log\frac{s}{a} \right) \right]$$ is the right hand of \eqref{E_D_2a}.
To show this observe that by Markov property of $S_t$ we have
\begin{align*}
	&\E_{\log s}[e^{-r\tau_{a^*}}\I(\tau_{a^*}<\infty, S_{\tau_{a^*}}=a^*)|\mathcal{F}_t] \\
	&=\I(\tau_{a^*}>t)\E_{\log S_t}[e^{-r\tau_{a^*}}\I(\tau_{a^*}<\infty, S_{\tau_{a^*}}=a^*)]\\&+
	I(\tau_{a^*}\leq t)e^{-r\tau_{a^*}}\I(\tau_{a^*}<\infty, S_{\tau_{a^*}}=a^*)
	=e^{-r\tau_{a^*}\wedge t}A(S_{t\wedge \tau_{a^*}}),
\end{align*}
where we used fact that $A(s)=0$ for $s<a$ and $A(a^*)=\frac{\sigma^2}{2}W^{(r)'}\left( 0 \right)=1$ because $W^{(r)}(0)=0$ due to the assumption that $\sigma >0$ (see \cite[Lem. 3.1 and Lem. 3.2]{scalereview}).
Since $b$ appearing in $\tau_b^+$ above is general, hence
\[(\A \overline{V}_{a^*} - r\overline{V}_{a^*})(s) =0\qquad\text{for $s>a$}
\]
and thus for $s > a$
\begin{equation}
    (\A V - rV)(s) = -(\A G - rG)(s)=-H(s)
\end{equation}
by definition (\ref{H_def}) of $H(s)$.

Now, for $s < a$, $V = 0$ and $(\A V - rV)(s) = 0$.
To prove (\ref{system1}) one need thus to prove that  $H(s) = \A G(s) - rG(s) \leq 0$ for $s < a$.
Using the fact that $a < K < h$ we can write $G(s)$ for $s < a$ in the following form:
\begin{equation}
    G(s) = (K-s)\left(\frac{h}{s} \right)^\alpha.
\end{equation}
Then
\begin{equation}
    \begin{split}
        H(s) = \tilde\mu s G'(s) + \frac{\sigma^2}{2}  s^2 G''(s) + \lambda\int_0^\infty \left( G(se^{-y}) - G(s) \right)\dx F_U (y) - rG(s) \\=
        \left( \frac{h}{s} \right)^\alpha \left[ s\left(\alpha-1\right)\left( r + \frac{\lambda}{1+\rho} - \frac{\sigma^2}{2}\alpha \right) + s\lambda + sr - \frac{s\rho\lambda}{\rho + 1 - \alpha}   \right. \\
        \left. -K (\alpha+1)\left(  r - \frac{\lambda\alpha}{(1+\rho)(\rho-\alpha)} - \frac{\sigma^2}{2}\alpha  \right) \right].
    \end{split}
\end{equation}
It can easily be seen that the term $K(\alpha+1)\left(  r - \frac{\lambda\alpha}{(1+\rho)(\rho-\alpha)} - \frac{\sigma^2}{2}\alpha   \right)$ is strictly positive, as $-1 < \alpha < 0$. Additionally,
$\left(\alpha-1\right)\left( r + \frac{\lambda}{1+\rho} - \frac{\sigma^2}{2}\alpha \right) + \lambda + r - \frac{\rho\lambda}{\rho + 1 - \alpha} $ is strictly negative. As both $h$ and $s$ are positive, hence $H(s) < 0$.
This completes the proof.

\hfill$\square$

\section{Numerical analysis}\label{sec:num}

\subsection{Geometric Brownian motion}\label{subsec:GBM}

As the first case in the numerical analysis, we consider the underlying asset price described by the geometric Brownian motion. We set the intensity $\lambda$ of the $N_t$ process from equation (\ref{X_def}) equal to zero and thus $X_t$ becomes the arithmetic Brownian motion with drift parameter $\mu = r - \frac{\sigma^2}{2}$. This example corresponds to the option evaluated in \cite{algo}. The scope of the numerical analysis here is to find the optimal exercise level $a^*$ and the fair price $\overline{V}(s)$ of the option. The parameters are chosen as follows: the strike price $K=100$, the threshold $h=120$ so that $h > K$, risk-free rate $r = 5\%$ and the volatility of the underlying asset $\sigma^2 = 0.2$. Additionally the initial price of the underlying asset $s = S_0$ is set to 110. We first start with calculation of $a^*$. Using formula (\ref{a*}) we obtain $a^* = 50$. This value fits well to the assumption (\ref{aass}). Furthermore, when we use formula (43) from \cite{algo}, we get the same result:
\begin{equation}
	a^* = \frac{\eta_2 + \alpha}{\eta_2 + \alpha -1}K = 100\frac{-0.5 - 0.5}{-0.5 - 0.5 - 1} = 50.
\end{equation}	
Now, by using Theorem \ref{price}, we get the price of the cancellable option $\overline{V}(s) \approx 21.76$. Again, when we use the formula (37) proposed in \cite{algo}, we get the same result:
\begin{equation}
	\overline{V}(s) = (K-a^*)\left(\frac{s}{a^*}\right)^{\eta_2}\left(\frac{h}{a^*}\right)^\alpha = \frac{250}{\sqrt{132}} \approx 21.76.
\end{equation}
Now, according to \cite[Chap. 9.2]{Wilmott}, the price of the standard perpetual American option for the no-dividend case is given by:
\begin{equation}
	\overline{V}(s) = Bs^{\alpha^-},
\end{equation}
where $B = -\frac{1}{\alpha^-}\left(\frac{K}{1-1/\alpha^-}\right)^{1-\alpha^-}$ and $\alpha^- = -\frac{2r}{\sigma^2}$. The price of this option with parameters chosen as previously in this section is equal to 36.70. This price is significantly higher than the price of the corresponding cancellable option due to the higher risk the issuer of this contract has to deal with.

Finally,  in Figure \ref{sklejenie_BS} we demonstrate the dependence
of the cancellable option price on the initial price of the underlying instrument.
Note that the price and payoff function fit smoothly at $s = a^*$.

	\begin{figure}\label{sklejenie_BS}
	\includegraphics[width=\textwidth]{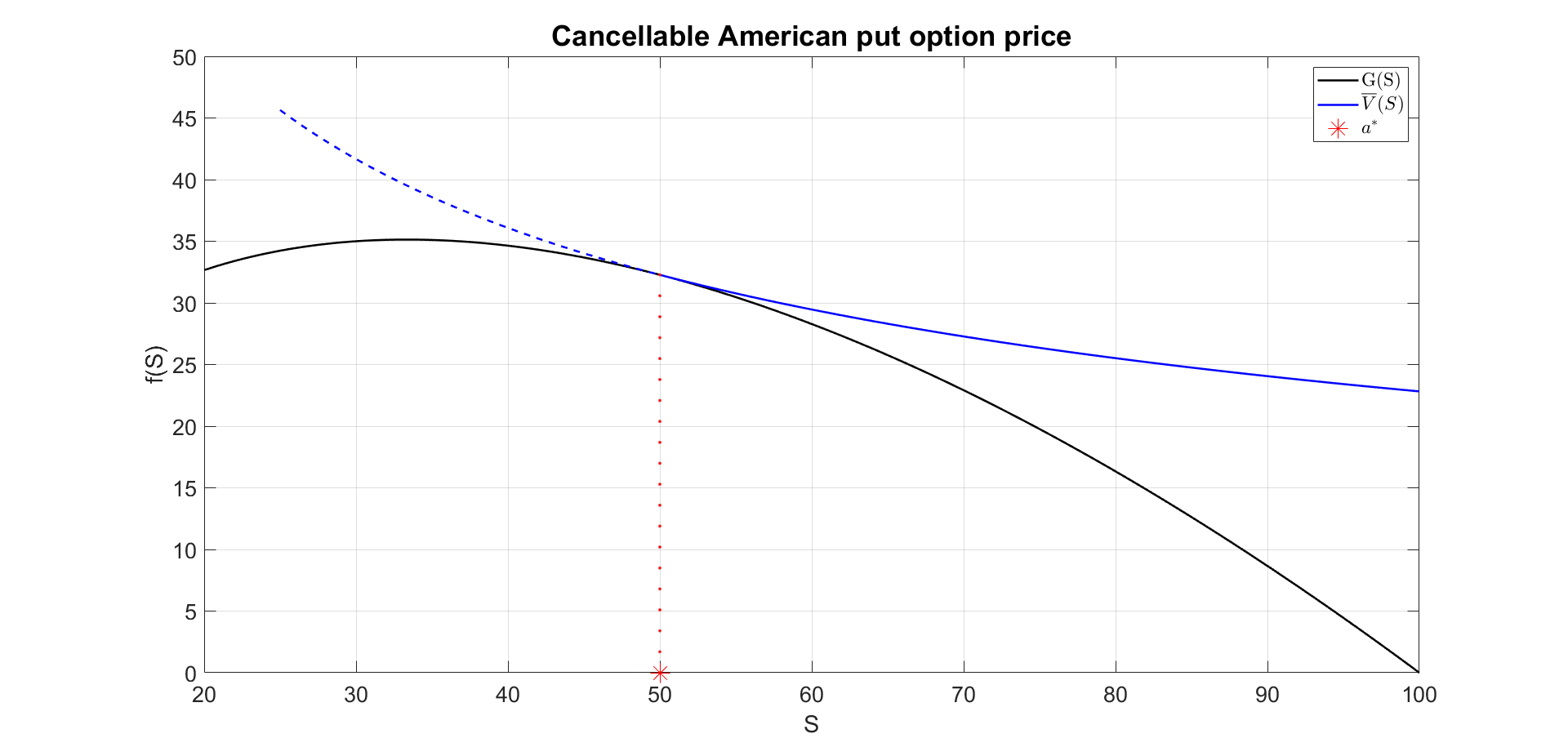}
	\caption{Smooth fit of the payoff and the price functions for the geometric Brownian motion with parameters: $\sigma^2 = 0.2$, $r = 0.05$, $K = 100$, $h = 120$.}
\end{figure}

\subsection{Geometric spectrally negative L\'evy process}

Here we perform similar calculation as in Subsection \ref{subsec:GBM}, but now we set a fixed $\lambda > 0$. We keep the other parameters unchanged, i.e. $h = 120,\ s=S_0 = 110,\ K = 100,\ r = 5\%,\ \sigma^2 = 0.2$ and additionally set $\lambda = 5$ and $\rho = 2$. Again, we start by finding the optimal threshold $a^*$. With formula (\ref{a*}) we obtain $a^* \approx 63.18$. One more time, we use Theorem \ref{price} to find the fair price of the cancellable option and we get $\overline{V}(s) \approx 18.99$.
The price is smaller although a priori it is hard to expect it uniquely.
Indeed, although all the jumps of the underlying are downward the drift is still bigger 
which is a consequence of applying martingale measure in a pricing formula.

As the final step, in Figure \ref{sklejenie_Levy} we show the behaviour of the payoff and price functions of the cancellable put option, depending on the initial underlying asset price. Again, the smooth fit of the afrementioned functions is clearly visible for $S = a^*$.

	\begin{figure}\label{sklejenie_Levy}
		\includegraphics[width=\textwidth]{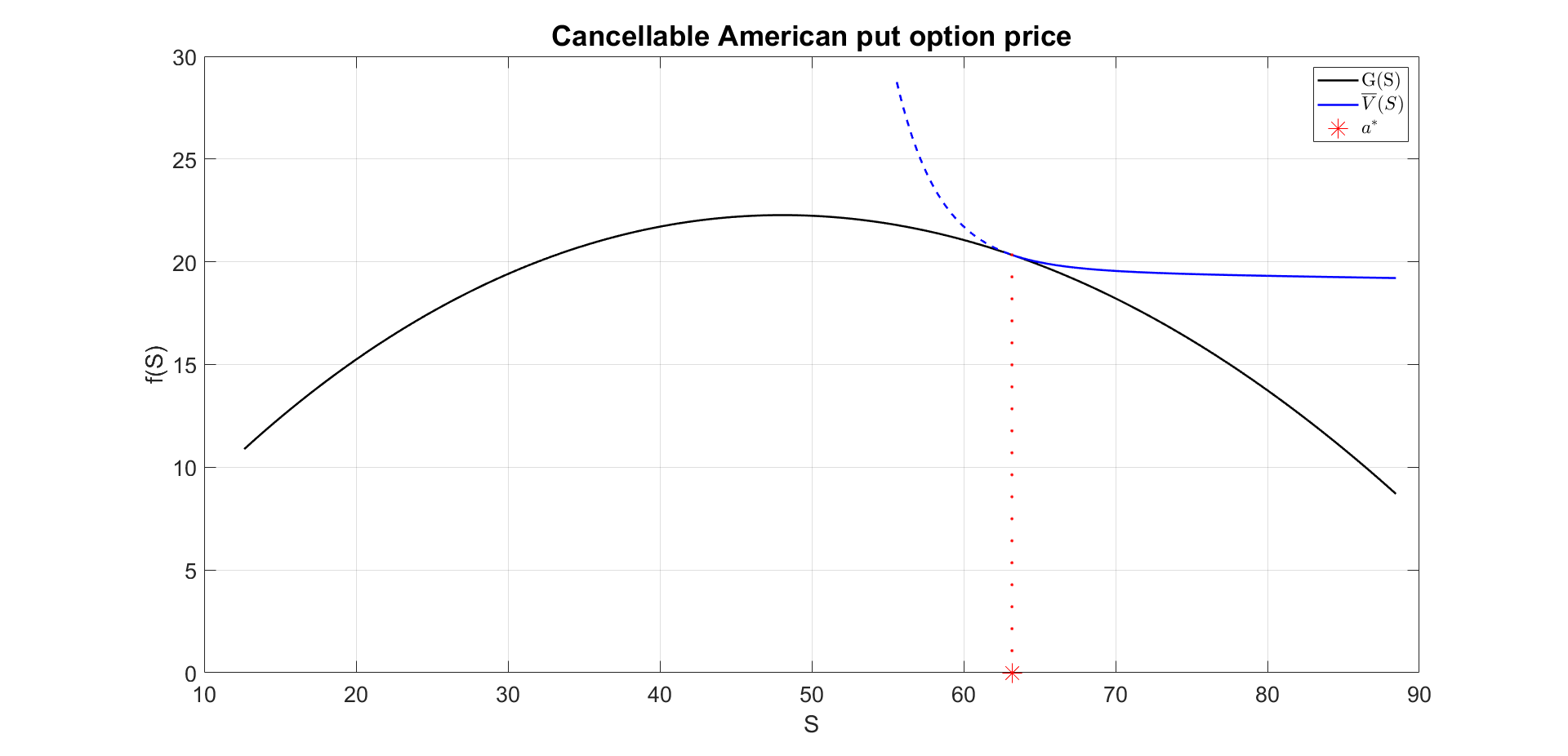}
		\caption{Smooth fit of the payoff and the price functions for: $\rho = 2$, $\sigma^2 = 0.2$, $r = 0.05$, $\lambda = 5$, $K = 100$, $h = 120$.}
	\end{figure}

%
%
%


\end{document}